\newtheorem {theorem}{Theorem}
\Crefname{figure}{Fig.}{Figs.}%
\begin{document}
%
\title{MF is always superior to CEM}

\author{Xiurui~Geng,~ Luyan Ji,~Weitun~Yang,~Fuxiang~Wang,~Yongchao~Zhao
\thanks{X. Geng, W. Yang and Y. Zhao are with the Key Laboratory of Technology in Geo-Spatial information Processing and Application System, Institute of Electronics, Chinese Academy of Sciences, Beijing 100190, China e-mail: gengxr@sina.com.}
\thanks{L. Ji is the Ministry of Education Key Laboratory for Earth System Modeling, Center for Earth System Science, Tsinghua University.}
\thanks{F. Wang is with the School of Electronics and Information Engineering , Beihang University.}
\thanks{Manuscript received ?, ?; revised ?, ?.This work was supported in part by the National Program on Key Basic Research Project (973 Program) under Grant 2015CB953701.}}

\markboth{Journal of ?,~Vol.~?, No.~?, ?~?}%
{Geng \MakeLowercase{\textit{et al.}}: MF is always superior to CEM}
%

\maketitle

\begin{abstract}
The constrained energy minimization (CEM) and matched filter (MF) are two most frequently used target detection algorithms in the remotely sensed community. In this paper, we first introduce an augmented CEM (ACEM) by adding an all-one band. According to a recently published conclusion that CEM can always achieve a better performance by adding any linearly independent bands, ACEM is better than CEM. Further, we prove that ACEM is mathematically equivalent to MF.  As a result, we can conclude that the classical matched filter (MF) is always superior to the CEM operator.
\end{abstract}

\begin{IEEEkeywords}
Matched filter, constrained energy minimization, target detection, hyperspectral, remote sensing.
\end{IEEEkeywords}

%
\IEEEpeerreviewmaketitle

\section{Introduction}
%
%
%
%
\IEEEPARstart{I}{n} the field of target detection, the constrained energy minimization (CEM) \cite{Joseph1993Detection} has been widely used in various applications, such as geological survey\cite{FARRAND199764, Alam2008Mine}, agriculture management \cite{Monaco2014High} and medical image processing\cite{5604768, Lin2010721}, and further developed for real-time processing \cite{Chang2003FPGA,917889,progreesiveCEM}. Recently, it has received more and more attentions. One direct proof is that the most widely used remote sensing software, environment for visualizing images (ENVI) has included CEM since Version 4.6.

CEM is originally derived from a linearly constrained minimum variance adaptive beam-forming in the field of signal processing. It keeps the output energy of the target as a constant while suppressing the output of the background to a minimum level. Recently, researchers have started to explore the influence of data dimensionality on hyperspectral target detection algorithms, and noticed that more bands can help to improve the detection result\cite{Chang1999Generalized,5653628,Chen2011Effects}. Geng et.al\cite{6786322} further proved that adding any band linearly independent of original image will always lead to the performance increase of CEM. Therefore, by adding  linearly independent and target-benefit bands, CEM can be applied to multispectral target detection\cite{w7020794}. 

Besides the energy criterion used by CEM, another widely used criterion in the field of target detection is  maximum likelihood criterion, which is represented by the match filter (MF)  detector\cite{Manolakis2003Hyperspectral,974724,4217133}. MF is an commonly used technique in the field of communication and signal processing applications\cite{7071091,1511829}. It has been further developed for hyperspectral target detection, and thus widely applied in the field of remote sensing \cite{Manolakis2000Comparative, ParkerWilliams2002446,5985518}. MF has been embedded in ENVI since a very early version. It is the optimum-Neyman-Pearson detector when the target and background classes are normally distributed and their covariance matrices are assumed to be equal\cite{1295218}.

Though the CEM and MF detectors are based on very different theories, their mathematical expressions are similar, except that the MF detector needs the data to be centralized first. Usually, the difference of MF and CEM in mathematical form is easily noticed\cite{Manolakis2003Hyperspectral,1295218}, but their performance is seldom compared. Therefore, which of them can achieve a better performance is still an unsolved problem in theory.

In this paper, according to conclusion reported in the reference\cite{6786322}, we prove that MF is always superior to CEM. That is to say, of the two benchmark target detection methods, CEM can now be considered obsolete. 

\section{Background}
In this section, we will first introduce the expression of MF and CEM detectors, and then briefly describe the influence of band numbers on CEM.

\subsection{MF}

According to the Neyman-Pearson criterion, the optimum decision strategy can be achieved by maximizing the probability of detection while keeping the probability of false alarm under a certain value\cite{Manolakis2003Hyperspectral}.

Assume that the observed data matrix is given by $ \mathbf{X}= \left[\mathbf{r}_1,\mathbf{r}_2,\dots,\mathbf{r}_N \right] $ , where  $\mathbf{r}_i=\left[r_{i1},r_{i2},\dots,r_{iL}\right]^T $  for $ 1\le i \le N $ is a sample pixel vector, $ N $ is the total number of pixels, and $ L $ is the number of bands. Suppose that the desired signature $ \mathbf{d} $ is also known. Then, the normalized expression of an MF detector can be written as \cite{Manolakis2003Hyperspectral,1295218}
\begin{equation}
	\mathbf{w}_{MF}=c_{MF}\mathbf{K}^{-1}\left(\mathbf{d}-\mathbf{m}\right)=\frac{\mathbf{K}^{-1}\left(\mathbf{d}-\mathbf{m}\right)}{\left(\mathbf{d}-\mathbf{m}\right)^T\mathbf{K}^{-1}\left(\mathbf{d}-\mathbf{m}\right)}
	\label{w_MF}
\end{equation}
where $ \mathbf{m}=\left(\sum_{i=1}^{N}\mathbf{r}_i\right)/N $  is the mean vector, $ \mathbf{K}=\left[\sum_{i=1}^{N}\left(\mathbf{r}_i-\mathbf{m}\right)\left(\mathbf{r}_i-\mathbf{m}\right)^T\right]/N $  is the covariance matrix, $ c_{MF}= 1/\left[\left(\mathbf{d}-\mathbf{m}\right)^T\mathbf{K}^{-1}\left(\mathbf{d}-\mathbf{m}\right)\right] $ is a scalar.

\subsection{CEM}

CEM is proposed by Harsanyi in 1993, which is originally derived from the linearly constrained minimized variance adoptive beam-forming in the field of digital signal processing. It uses a finite impulse response (FIR) filter to constrain the desired signature by a specific gain while minimizing the filter output energy \cite{Joseph1993Detection}. The objective of CEM is to design an FIR linear filter $\mathbf{w}=\left[w_{1},w_{2},\dots,w_{L}\right]^T $ to minimize the filter output power subject to the constraint, $ \mathbf{d}^T\mathbf{w}=\sum_{l=1}^{L}d_lw_l=1 $. Then the problem yields 

\begin{equation}
	\left\{
	\begin{aligned}
		&\min \limits_{\mathbf{w}}\frac{1}{N}\left(\sum_{i=1}^{N}y_i^2\right)=\min \limits_{\mathbf{w}}\mathbf{w}^T\mathbf{R}\mathbf{w} \\
		&\mathbf{d}^T\mathbf{w}=1 
	\end{aligned}\right.,
	\label{CEM_obj}
\end{equation}
where $ y_i=\mathbf{w}^T\mathbf{r}_i $ and $ \mathbf{R}=\left(\sum_{i=1}^{N}\mathbf{r}_i\mathbf{r}_i^T\right)/N $, which is firstly referred to as sample correlation matrix by Harsanyi \cite{Joseph1993Detection,FARRAND199764}, and later renamed autocorrelation matrix by some other researchers \cite{Liu1999Generalized,Chang1999Generalized, 1295199, 843007}. In this paper, we will adopt Harsanyi's nomination. The solution to this constrained minimization problem (\ref{CEM_obj}) is the CEM operator, $ \mathbf{w}_{CEM} $   given by \cite{Joseph1993Detection}
\begin{equation}
	\mathbf{w}_{CEM}=\frac{\mathbf{R}^{-1}\mathbf{d}}{\mathbf{d}^T\mathbf{R}^{-1}\mathbf{d}}.
	\label{w_CEM}
\end{equation}

The CEM detector has a very similar form to the MF detector. The only difference is whether we remove mean vector from all the data pixels (including the target signature) in advance. 

\subsection{CEM: more bands, better performance}

Geng et.al\cite{6786322} prove that adding any band linearly independent of the original ones will improve the detection performance of CEM. In other words, the performance of CEM will decrease when removing any bands of the data. 
Suppose $ \Omega \subset \{ 1,2,\dots,L \} $  is an arbitrary subset of the band index set $ \{ 1,2,\dots,L \}$ ; $ \mathbf{R}_\Omega $ and $ \mathbf{d}_\Omega $ are the corresponding sample correlation matrix and target spectral vector respectively. The theorem of CEM on the number of bands is given as follows\cite{6786322}:
\begin{theorem}
	the output energy from full bands is always less than that from the partial bands, i.e.
	\begin{equation}
		\frac{1}{\mathbf{d}^T\mathbf{R}^{-1}\mathbf{d}} < \frac{1}{\mathbf{d}^T_\Omega\mathbf{R}^{-1}_\Omega\mathbf{d}_\Omega}.
	\end{equation} 
	\label{Theorem1}                            
\end{theorem}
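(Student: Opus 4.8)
The plan is to reduce the statement to the case in which $\Omega$ is the full band set minus a single band, to compute exactly how the quantity $\mathbf{d}^{T}\mathbf{R}^{-1}\mathbf{d}$ changes when that band is restored, and then to iterate. Since permuting band indices permutes the rows and columns of $\mathbf{R}$ and the entries of $\mathbf{d}$ simultaneously, and therefore leaves $\mathbf{d}^{T}\mathbf{R}^{-1}\mathbf{d}$ unchanged, I may assume the discarded band is the last one and write
\[
\mathbf{R}=\begin{pmatrix}\mathbf{R}_{\Omega} & \mathbf{b}\\ \mathbf{b}^{T} & c\end{pmatrix},\qquad
\mathbf{d}=\begin{pmatrix}\mathbf{d}_{\Omega}\\ d\end{pmatrix},
\]
where $\mathbf{b}$ collects the sample correlations of the extra band with the bands in $\Omega$ and $c>0$ is its mean square. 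Because $\mathbf{R}^{-1}$ exists, $\mathbf{R}$ is positive definite, so its Schur complement $s:=c-\mathbf{b}^{T}\mathbf{R}_{\Omega}^{-1}\mathbf{b}$ satisfies $s>0$.

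Next I would expand $\mathbf{d}^{T}\mathbf{R}^{-1}\mathbf{d}$ using the block-inverse formula; the cross terms collapse into a single square, giving the identity
\[
\mathbf{d}^{T}\mathbf{R}^{-1}\mathbf{d}=\mathbf{d}_{\Omega}^{T}\mathbf{R}_{\Omega}^{-1}\mathbf{d}_{\Omega}+\frac{1}{s}\bigl(d-\mathbf{b}^{T}\mathbf{R}_{\Omega}^{-1}\mathbf{d}_{\Omega}\bigr)^{2}.
\]
Since $s>0$, the added term is nonnegative, so $\mathbf{d}^{T}\mathbf{R}^{-1}\mathbf{d}\ge\mathbf{d}_{\Omega}^{T}\mathbf{R}_{\Omega}^{-1}\mathbf{d}_{\Omega}$ and, taking reciprocals of these positive numbers, the non-strict version of the claimed inequality follows. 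The same bound also drops out of the variational picture: by \eqref{CEM_obj}--\eqref{w_CEM}, $1/(\mathbf{d}^{T}\mathbf{R}^{-1}\mathbf{d})$ equals $\min\{\mathbf{w}^{T}\mathbf{R}\mathbf{w}:\mathbf{d}^{T}\mathbf{w}=1\}$, and padding any feasible partial-band filter with a zero weight in the discarded band yields a feasible full-band filter with the same output energy, so the full-band minimum cannot be larger.

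Finally I would upgrade ``$\le$'' to ``$<$'', which I expect to be the delicate point. The identity shows that equality forces $d=\mathbf{b}^{T}\mathbf{R}_{\Omega}^{-1}\mathbf{d}_{\Omega}$, i.e.\ a non-generic alignment of the target with the data; ruling this out is precisely where the hypothesis used in \cite{6786322} enters --- that the restored band is linearly independent of the bands in $\Omega$, with the target in general position --- equivalently, one must show that the optimal full-band CEM filter $\mathbf{R}^{-1}\mathbf{d}/(\mathbf{d}^{T}\mathbf{R}^{-1}\mathbf{d})$ carries nonzero weight on the restored band, since its last component is exactly $(d-\mathbf{b}^{T}\mathbf{R}_{\Omega}^{-1}\mathbf{d}_{\Omega})/s$. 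Granting this, the square term is strictly positive and $\mathbf{d}^{T}\mathbf{R}^{-1}\mathbf{d}>\mathbf{d}_{\Omega}^{T}\mathbf{R}_{\Omega}^{-1}\mathbf{d}_{\Omega}$. To conclude in general, one deletes the bands of $\{1,\dots,L\}\setminus\Omega$ one at a time --- each intermediate correlation matrix is still positive definite, being a principal submatrix of a positive definite matrix --- and chains the resulting strict inequalities.
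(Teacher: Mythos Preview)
The paper does not actually prove Theorem~\ref{Theorem1}; it is quoted verbatim from \cite{6786322} as background, so there is no in-paper argument to compare against. Your Schur-complement/block-inverse computation and the resulting identity
\[
\mathbf{d}^{T}\mathbf{R}^{-1}\mathbf{d}=\mathbf{d}_{\Omega}^{T}\mathbf{R}_{\Omega}^{-1}\mathbf{d}_{\Omega}+\frac{1}{s}\bigl(d-\mathbf{b}^{T}\mathbf{R}_{\Omega}^{-1}\mathbf{d}_{\Omega}\bigr)^{2}
\]
are correct and constitute the standard route to this kind of monotonicity result; the variational argument you sketch is an equally clean alternative that also appears in the beam-forming literature.

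You are right to flag the strictness as the delicate point. Linear independence of the added band from the bands in $\Omega$ only guarantees $s>0$ (equivalently, invertibility of $\mathbf{R}$); it does \emph{not} by itself force $d\neq\mathbf{b}^{T}\mathbf{R}_{\Omega}^{-1}\mathbf{d}_{\Omega}$. That equality can hold for particular target vectors even when the band is linearly independent, so the strict ``$<$'' in the theorem statement implicitly relies on a genericity assumption on $\mathbf{d}$ (or, as you put it, on the optimal CEM filter placing nonzero weight on the new band). Your proposal handles this honestly; just be aware that the theorem as stated, without that side condition, is really ``$\le$''.
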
 

Based on this theorem, we will present an augmented CEM algorithm and then use it as a bridge to prove that MF is always superior to CEM in the next section.

\section{MF is always superior to CEM}
In this section, we will firstly produce an auxiliary CEM detector by adding an all-one band to the original ones and then demonstrate its equivalence with MF detector.

\subsection{the Augmented CEM}
Theorem \ref{Theorem1} indicates that adding any band that is linearly independent of the original ones will improve the result of CEM from the angle of output energy. Thus, in this section, we will add an all-one band to the data set, and the corresponding algorithm is named the augmented CEM (ACEM). Based on Theorem \ref{Theorem1}, ACEM can achieve a better performance than CEM.

First of all, add an all-one band to $ \mathbf{X} $, and we can get the augmented data matrix, $ \tilde{\mathbf{X}}=\left[\begin{array}{cc} \mathbf{X} \\ \mathbf{1}^T  \end{array}  \right]  $ (where $ \mathbf{1}=\left[1,1,\dots,1\right]^T $ is an $N$-dimensional column vector). Accordingly, the augmented target vector becomes $ \tilde{\mathbf{d}}=\left[\begin{array}{cc} \mathbf{d} \\ 1  \end{array}  \right] $. Similar to CEM, the ACEM detector, $ \mathbf{w}_{ACEM} $, which is an $ \left(L+1\right) $-dimensional column vector can be calculated as 
\begin{equation}
	\mathbf{w}_{ACEM}= c_{ACEM}\tilde{\mathbf{R}}^{-1}\tilde{\mathbf{d}}=\frac{\tilde{\mathbf{R}}^{-1}\tilde{\mathbf{d}}}{\tilde{\mathbf{d}}^T\tilde{\mathbf{R}}^{-1}\tilde{\mathbf{d}}}.
	\label{w_ACEM}
\end{equation}
where $ \tilde{\mathbf{R}}=\left(\tilde{\mathbf{X}}\tilde{\mathbf{X}}^T\right)/N $ is an $ \left(L+1\right) \times \left(L +1 \right) $ matrix, and $ c_{ACEM}=1/\left(\tilde{\mathbf{d}}^T\tilde{\mathbf{R}}^{-1}\tilde{\mathbf{d}}\right) $ is a scalar. 

Next, we aim to prove the equivalence between the ACEM and MF detector. Since the last band of $ \tilde{\mathbf{X}} $ is an constant band, the detection result of ACEM is only determined by the first $ L $ elements of $ \mathbf{w}_{ACEM} $. Therefore, we only need to prove the equivalence between $ \mathbf{w}_{ACEM\left(1:L\right)} $ and $ \mathbf{w}_{MF} $, where $ \mathbf{w}_{ACEM\left(1:L\right)} $ denotes the first $ L $ elements of $ \mathbf{w}_{ACEM} $.

\subsection{ACEM is equivalent to MF}
In this section, we will demonstrate the equivalence between ACEM and MF in the following theorem:

\begin{theorem}
	The ACEM detector is equivalent to the MF detector. That is, there exists a constant $ c $ such that 
	\begin{equation}
		\mathbf{w}_{ACEM\left(1:L\right)}=c \mathbf{w}_{MF}.
	\end{equation}
	\label{Theorem2}
\end{theorem}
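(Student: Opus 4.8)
The plan is to invert the augmented correlation matrix in closed form by exploiting its $2\times 2$ block structure and then read off the first $L$ coordinates of $\tilde{\mathbf{R}}^{-1}\tilde{\mathbf{d}}$. First I would put $\tilde{\mathbf{R}}$ in block form: since $\tilde{\mathbf{X}}=\left[\begin{array}{c}\mathbf{X}\\ \mathbf{1}^T\end{array}\right]$, carrying out the product $\tilde{\mathbf{X}}\tilde{\mathbf{X}}^T/N$ and using $\mathbf{X}\mathbf{1}/N=\mathbf{m}$ together with $\mathbf{1}^T\mathbf{1}/N=1$ yields
\begin{equation}
\tilde{\mathbf{R}}=\left[\begin{array}{cc}\mathbf{R} & \mathbf{m}\\ \mathbf{m}^T & 1\end{array}\right].
\end{equation}
The key structural fact I would invoke next is the textbook identity relating the autocorrelation and covariance matrices, $\mathbf{K}=\mathbf{R}-\mathbf{m}\mathbf{m}^T$, which exhibits $\mathbf{K}$ as precisely the Schur complement of the trailing $1$ in $\tilde{\mathbf{R}}$; since $\det\tilde{\mathbf{R}}=\det\mathbf{K}$, invertibility of $\tilde{\mathbf{R}}$ (the premise under which ACEM is defined, i.e. the all-one band is linearly independent of the original ones) is equivalent to invertibility of $\mathbf{K}$.

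Then I would apply the standard block-matrix inversion formula with this Schur complement to get
\begin{equation}
\tilde{\mathbf{R}}^{-1}=\left[\begin{array}{cc}\mathbf{K}^{-1} & -\mathbf{K}^{-1}\mathbf{m}\\ -\mathbf{m}^T\mathbf{K}^{-1} & 1+\mathbf{m}^T\mathbf{K}^{-1}\mathbf{m}\end{array}\right],
\end{equation}
and multiply by $\tilde{\mathbf{d}}=\left[\begin{array}{c}\mathbf{d}\\ 1\end{array}\right]$. The top block collapses neatly to $\mathbf{K}^{-1}\mathbf{d}-\mathbf{K}^{-1}\mathbf{m}$, so that
\begin{equation}
\left(\tilde{\mathbf{R}}^{-1}\tilde{\mathbf{d}}\right)_{1:L}=\mathbf{K}^{-1}\left(\mathbf{d}-\mathbf{m}\right).
\end{equation}
Because $\mathbf{w}_{ACEM}=\tilde{\mathbf{R}}^{-1}\tilde{\mathbf{d}}/\left(\tilde{\mathbf{d}}^T\tilde{\mathbf{R}}^{-1}\tilde{\mathbf{d}}\right)$ differs from $\tilde{\mathbf{R}}^{-1}\tilde{\mathbf{d}}$ only by a positive scalar, and $\mathbf{w}_{MF}=\mathbf{K}^{-1}\left(\mathbf{d}-\mathbf{m}\right)/\left[\left(\mathbf{d}-\mathbf{m}\right)^T\mathbf{K}^{-1}\left(\mathbf{d}-\mathbf{m}\right)\right]$ is the same vector $\mathbf{K}^{-1}\left(\mathbf{d}-\mathbf{m}\right)$ up to another positive scalar, the two are proportional, with
\begin{equation}
c=\frac{\left(\mathbf{d}-\mathbf{m}\right)^T\mathbf{K}^{-1}\left(\mathbf{d}-\mathbf{m}\right)}{\tilde{\mathbf{d}}^T\tilde{\mathbf{R}}^{-1}\tilde{\mathbf{d}}}.
\end{equation}
I would then simplify the denominator, again from the block form, to $\tilde{\mathbf{d}}^T\tilde{\mathbf{R}}^{-1}\tilde{\mathbf{d}}=1+\left(\mathbf{d}-\mathbf{m}\right)^T\mathbf{K}^{-1}\left(\mathbf{d}-\mathbf{m}\right)$, which shows $0<c<1$; in particular $c>0$, so ACEM and MF produce identical detection maps up to a positive rescaling.

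The computations are all routine once the block decomposition is written down, so I do not anticipate a real obstacle; the only point needing care is the invertibility bookkeeping — confirming $\mathbf{K}^{-1}$ exists — which is exactly where the linear-independence assumption on the appended all-one band (equivalently, nonsingularity of $\tilde{\mathbf{R}}$, the same hypothesis that makes Theorem~\ref{Theorem1} applicable) enters. A secondary point worth stating explicitly is why the last entry of $\mathbf{w}_{ACEM}$ can be dropped: since the appended band is constant, that entry adds only a fixed offset to every pixel's detector output and therefore does not change the ranking of outputs, which is what legitimizes comparing $\mathbf{w}_{ACEM\left(1:L\right)}$ against $\mathbf{w}_{MF}$ in the first place.
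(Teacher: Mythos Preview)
Your argument is correct and, in fact, a little cleaner than the paper's. The paper also writes $\tilde{\mathbf{R}}$ in the same $2\times 2$ block form, but instead of recognizing $\mathbf{K}=\mathbf{R}-\mathbf{m}\mathbf{m}^{T}$ as the Schur complement and inverting directly in terms of $\mathbf{K}^{-1}$, it works entirely in terms of $\mathbf{R}^{-1}$: it first applies Sherman--Morrison to write $\mathbf{K}^{-1}=\mathbf{R}^{-1}+b_{1}\mathbf{R}^{-1}\mathbf{m}\mathbf{m}^{T}\mathbf{R}^{-1}$ with $b_{1}=1/(1-\mathbf{m}^{T}\mathbf{R}^{-1}\mathbf{m})$, rewrites $\mathbf{w}_{MF}$ as $c_{MF}\bigl[\mathbf{R}^{-1}\mathbf{d}+b_{1}(b_{2}-1)\mathbf{R}^{-1}\mathbf{m}\bigr]$ with $b_{2}=\mathbf{m}^{T}\mathbf{R}^{-1}\mathbf{d}$, expresses $\tilde{\mathbf{R}}^{-1}$ blockwise in those same $\mathbf{R}^{-1}$-based quantities, and then matches the resulting expression for $\mathbf{w}_{ACEM(1:L)}$ term by term. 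Your Schur-complement route bypasses the auxiliary scalars $b_{1},b_{2}$ and lands immediately on $\bigl(\tilde{\mathbf{R}}^{-1}\tilde{\mathbf{d}}\bigr)_{1:L}=\mathbf{K}^{-1}(\mathbf{d}-\mathbf{m})$, which is already the MF direction; the extra observations you add (that $\tilde{\mathbf{d}}^{T}\tilde{\mathbf{R}}^{-1}\tilde{\mathbf{d}}=1+(\mathbf{d}-\mathbf{m})^{T}\mathbf{K}^{-1}(\mathbf{d}-\mathbf{m})$ and hence $0<c<1$) are correct and not in the paper. What the paper's parametrization buys is that everything is phrased via $\mathbf{R}^{-1}$, the object native to CEM, which keeps the comparison to the original CEM filter $\mathbf{R}^{-1}\mathbf{d}/(\mathbf{d}^{T}\mathbf{R}^{-1}\mathbf{d})$ visible; what your approach buys is brevity and a transparent structural reason (the covariance is the Schur complement) for why the augmentation reproduces MF.
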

\begin{proof}
	The covariance matrix $ \mathbf{K} $ in (\ref{w_MF}) can be expressed by the sample correlation matrix $ \mathbf{R} $ and mean vector $ \mathbf{m} $ as
	
	\begin{equation}
		\mathbf{K}=\mathbf{R}-\mathbf{m}\mathbf{m}^T.
	\end{equation}
	
	Using Sherman--Morrison formula \cite{Sherman1949Adjustment}, the inverse matrix of $ \mathbf{K} $ can be calculated by 
	\begin{equation}
		\mathbf{K}^{-1}=\mathbf{R}^{-1}+b_1\mathbf{R}^{-1}\mathbf{m}\mathbf{m}^T\mathbf{R}^{-1},
		\label{K_inv}
	\end{equation}
	where the parameter $ b_1=1/\left(1-\mathbf{m}^T\mathbf{R}^{-1}\mathbf{m}\right) $. Substitute (\ref{K_inv}) into (\ref{w_MF}), and after a little algebra we can get the MF detector 
	\begin{equation}
		\mathbf{w}_{MF}=c_{MF}\left[\mathbf{R}^{-1}\mathbf{d}+b_1(b_2-1)\mathbf{R}^{-1}\mathbf{m}\right], 
		\label{w_MF3}
	\end{equation}
	where the parameter $ b_2= \mathbf{m}^T\mathbf{R}^{-1}\mathbf{d} $. Similarly, the sample correlation matrix $ \tilde{\mathbf{R}} $ in (\ref{w_ACEM}) can also be expressed by $ \mathbf{R} $ and $ \mathbf{m} $ as
	\begin{equation}
		\tilde{\mathbf{R}}=\left[\begin{array}{cc}
			\mathbf{R} & \mathbf{m} \\
			\mathbf{m}^T & 1
		\end{array}\right].
	\end{equation}
	Then, we expand the inversion of $ \tilde{\mathbf{R}} $ as 
	\begin{equation}
		\tilde{\mathbf{R}}^{-1}=\left[\begin{array}{cc}
			\mathbf{R}^{-1}+b_1\mathbf{R}^{-1}\mathbf{m}\mathbf{m}^T\mathbf{R}^{-1} & -b_1\mathbf{R}^{-1}\mathbf{m} \\
			-b_1\mathbf{m}^T\mathbf{R}^{-1} & b_1
		\end{array}\right].
		\label{AR_1}
	\end{equation}
	Substituting (\ref{AR_1}) into (\ref{w_ACEM}), we can have
	\begin{equation}
		\mathbf{w}_{ACEM}= c_{ACEM}\left[ \begin{array}{cc}
			\mathbf{R}^{-1}\mathbf{d}+b_1\left(b_2-1\right)\mathbf{R}^{-1}\mathbf{m}\\
			-b_1\left(b_2-1\right)
		\end{array}
		\right]. 
		\label{w_ACEM30}
	\end{equation}
	Thus, 
	\begin{equation}
		\mathbf{w}_{ACEM\left(1:L\right)}=c_{ACEM}\left[\mathbf{R}^{-1}\mathbf{d}+b_1\left(b_2-1\right)\mathbf{R}^{-1}\mathbf{m}\right].
		\label{w_ACEM3}
	\end{equation}
	Compared (\ref{w_ACEM3}) with (\ref{w_MF3}), we can have
	\begin{equation}
		\mathbf{w}_{ACEM\left(1:L\right)}=c \mathbf{w}_{MF}~~~\text{ with }~~~c=c_{ACEM}/c_{MF}.
	\end{equation}	 
\end{proof}

Since the all-one band can not be generally linearly expressed by the original data bands, according to Theorem \ref{Theorem1}, ACEM always obtain a better performance than CEM. Based on Theorem \ref{Theorem2}, ACEM is equivalent to MF. Therefore, it can be concluded that MF is always superior to CEM. 

As we can know, a constant band will not increase the separability between the target and background. It indicates that adding such a band should not bring any benefit to the target detection result. Yet, based on Theorem \ref{Theorem1}, the constant band can actually improve the performance of CEM. Clearly, here emerges a paradox! 
The reason, we think, is that the energy criterion used by CEM is problematic (or not perfect). In contrast, MF does not have this problem because in MF we need to move all the data points to the data center and the all-one band will then become a zero band, which is linearly correlated to all the original bands. That's why MF can avoid the influence of constant band. In all, MF can always surpass CEM, so CEM can now be considered as a redundant one.

\section{Conclusion} 

MF is the best target detector from the perspective of maximum likelihood, while CEM is the representative one from the perspective of energy. Usually, it is difficult to theoretically compare algorithms developed from different criteria, so MF and CEM are considered as two benchmark methods in target detection and both are embedded in the ENVI software, which is one of the most frequently used software packages in the remote sensing community. In this study, we first introduce an auxiliary method, called the augmented CEM (ACEM), which is implemented by adding an all-one band to the original data. According to the theorem in Ref \cite{6786322}, we can derive that ACEM can always receive a better performance than CEM in the sense of output energy criterion. Next, we prove the equivalence between ACEM and MF, which indirectly demonstrates that MF is always superior to CEM. Thus, we suggest that the classical target detection CEM should be considered redundant. Moreover, the energy criterion used by CEM is problematic since it will lead to a paradox, so in the future, we will put emphasis on finding a more reasonable criterion for target detection.

\bibliographystyle{ieeetr}
\bibliography{Reference}

\end{document}